\newcommand{\imp}{\rightarrow}
\newcommand{\et}{\wedge}
\newcommand{\Dia}{\Diamond}
\newcommand{\dia}[1]{\langle #1 \rangle}
\newcommand{\M}{\widehat{K}}
\renewcommand{\phi}{\varphi}
\newcommand{\union}{\cup}
\newcommand{\inter}{\cap}
\newcommand{\bisim}{{\raisebox{.3ex}[0mm][0mm]{\ensuremath{\medspace \underline{\! \leftrightarrow\!}\medspace}}}}
\newcommand{\bisrel}{\ensuremath{\mathfrak{R}}}
\newcommand{\weg}[1]{}
\newcommand{\lbr}{[\![}
\newcommand{\rbr}{]\!]}
\newcommand{\II}[1]{\lbr #1 \rbr} 
\newcommand{\Formulas}{{\mathcal L}}
\newcommand{\langu}{\Formulas}
\newcommand{\state}{s}
\newcommand{\stateb}{t}
\newcommand{\Atoms}{P}
\newcommand{\atom}{p}
\newcommand{\Agents}{A}
\newcommand{\agent}{a}
\newcommand{\Domain}{{\mathcal D}}
\newcommand{\domain}{\Domain}
\newcommand{\Nat}{\mathbb N}
\newcommand{\Naturals}{\Nat}
\newcommand{\lang}{\langu}
\newcommand{\langpl}{\ensuremath{\lang_{\mathit{pl}}}}
\newcommand{\langel}{\ensuremath{\lang_{\mathit{el}}}}
\newcommand{\langpal}{\ensuremath{\lang_{\mathit{pal}}}}
\newcommand{\langapal}{\ensuremath{\lang_{\mathit{apal}}}}
\newcommand{\langbapal}{\ensuremath{\lang_{\mathit{bapal}}}}
\newcommand{\langgal}{\ensuremath{\lang_{\mathit{gal}}}}
\newcommand{\langcal}{\ensuremath{\lang_{\mathit{cal}}}}
\newcommand{\logicel}{\ensuremath{\mathit{EL}}}
\newcommand{\logicpal}{\ensuremath{\mathit{PAL}}}
\newcommand{\logicapal}{\ensuremath{\mathit{APAL}}}
\newcommand{\logicbapal}{\ensuremath{\mathit{BAPAL}}}
\newcommand{\logicgal}{\ensuremath{\mathit{GAL}}}
\newcommand{\logiccal}{\ensuremath{\mathit{CAL}}}
\newcommand{\logicqpal}{\ensuremath{\mathit{QPAL}}}
\newcommand{\know}{K}
\newcommand{\susp}{\M}
\newcommand{\var}{\mathit{var}}
\newtheorem{theorem}{Theorem}
\newtheorem{example}[theorem]{Example}
\newtheorem{definition}[theorem]{Definition}
\newtheorem{corollary}[theorem]{Corollary}
\newcommand{\ol}[1]{\ensuremath{\overline{#1}}}
\newcommand{\fmp}{\ensuremath{\textbf{fmp}}}
\title{No Finite Model Property for Logics of Quantified Announcements}
\author{Hans van Ditmarsch
\institute{Open University of the Netherlands\\ Heerlen, The Netherlands}
\email{hans.vanditmarsch@ou.nl}
\and
Tim French
\institute{University of Western Australia\\
Perth, Australia}
\email{tim.french@uwa.edu.au}
\and
Rustam Galimullin
\institute{University of Bergen \\ Bergen, Norway}
\email{rustam.galimullin@uib.no}
}
\begin{document}
\maketitle

\begin{abstract}
Quantification over public announcements shifts the perspective from reasoning strictly about the results of a particular announcement to reasoning about the existence of an announcement that achieves some certain epistemic goal. Depending on the type of the quantification, we get different formalisms, the most known of which are arbitrary public announcement logic (\logicapal{}), group announcement logic (\logicgal), and coalition announcement logic (\logiccal). It has been an open question whether the logics have the finite model property, and in the paper we answer the question negatively. We also discuss how this result is connected to other open questions in the field.
\end{abstract}

\section{Introduction} \label{sec.one}
One of the most well-known ways to introduce communication into the multi-agent epistemic logic (\logicel{}) \cite{meyer95} is by extending the logic with public announcements, which, being dynamic operators, model the situation of all agents publicly and simultaneously receiving the same piece of information. Epistemic logic with such operators is called \emph{public announcement logic} (\logicpal{}) \cite{plaza07}, and it extends \logicel{} with constructs $[\varphi]\psi$ that mean `after public announcement of $\varphi$, $\psi$'. 

A natural way to generalise \logicpal{}, with epistemic planning \cite{bolander11} flavour to it, is to consider quantification over public announcements\footnote{See a recent survey \cite{vanditmarsch20} for an overview.}. Such an extension allows us to reason about the \emph{existence} of an announcement, or a sequence thereof, that reaches certain epistemic goal. There are several ways to quantify over public announcements, and we call the resulting logics \emph{quantified public announcement logics} (\logicqpal{}). In the paper, we consider the Big Three of \logicqpal{}'s, namely \emph{arbitrary public announcement logic} (\logicapal{}), \emph{group announcement logic} (\logicgal{}), and \emph{coalition announcement logic} (\logiccal{}). 

\logicapal{} \cite{balbiani08} extends \logicpal{} with constructs $\Box \varphi$ that are read as `after any public announcement, $\varphi$ is true'. While \logicapal{} modalities do not take into account who announces a formula or whether the formula can be truthfully announced by any of the agents in a system, \logicgal{} constructs $[G]\varphi$ restrict the quantification to the formulas that agents actually know \cite{agotnes10}. Thus, $[G]\varphi$ is read as `after any joint truthful announcement by agents from group $G$, $\varphi$ is true'. `Truthful' here denotes the fact that the agents know the formulas they announce. Finally, \logiccal{} is somewhat similar to \logicgal{} with a crucial difference that agents outside of the selected group can also make a simultaneous announcement \cite{agotnes08,thesis}. \logiccal{} extends \logicpal{} with constructs $[\!\langle G \rangle \! ] \varphi$ that are read as `whatever agents from $G$ announce, there is a simultaneous announcement by the agents outside of $G$, such that $\varphi$ is true after the joint announcement'. Given that the modalities of \logiccal{} were inspired by coalition logic \cite{pauly02}, it is not surprising that they are game-theoretic in nature, and, in particular, express the property of $\beta$-effectivity\footnote{The dual $\langle \! [ G ] \! \rangle \varphi$ expresses $\alpha$-effectivity.}.

One of the pressing open questions of \logicqpal{} is whether they have the finite model property (FMP).
\begin{quote}
FMP: \textit{A logic has the FMP iff every formula of the logic that is true in some model is also true in a finite model.}
\end{quote}
It is a standard result that \logicel{} has the FMP \cite{goranko07}. As the reader may have already guessed, after having read the title of the paper, we show that \logicapal{}, \logicgal{}, and \logiccal{} do not have the FMP.

The result is important for a couple of reasons. First, it tells us something about the expressivity of \logicqpal{}'s. In particular, all of \logicapal{}, \logicgal{}, and \logiccal{}, are so expressive that they can force infinite models, i.e. there are formulas of the languages that can only be true on infinite structures. 

Second, the lack of FMP sheds light on a connected open problem of finding finitary axiomatisations of \logicqpal{}'s. It is known \cite{urquhart81} that 
\begin{quote}
\textit{Finitary axiomatisation} and \textit{FMP} imply \textit{decidability}. (\textasteriskcentered)
\end{quote}

We also know that \logicapal{}, \logicgal{}, and \logiccal{} are all undecidable \cite{agotnes16}. The corresponding proof is quite complex, but ultimately it presents for each logic a formula over a parameterised set of tiles, such that the formula is satisfiable if and only if the set of tiles can tile the plane. The construction of the tiling is not explicit, in the sense that there is no one-to-one correspondence of worlds to unique points on the plane. Rather, given a model that satisfies the formula, a series of finite tilings is created in such a way that it guarantees the existence of an infinite tiling. It is not clear how to extract an argument against FMP from the undecidability proof, or whether it is possible. In any way, our approach presented in Section~\ref{sec:fmpapal} is simpler and more elegant. 

Another reason why we cannot get the lack of the FMP for free from the undecidability is that (\textasteriskcentered) requires the axiomatisations at hand to be finitary. To the best of our knowledge, none of \logicapal{}, \logicgal{}, and \logiccal{} have a known finitary axiomatisation, although the first two are recursively axiomatisable, and providing any axiomatisation of \logiccal{} is an open problem. However, (\textasteriskcentered) cannot be relaxed to requiring only recursive axiomatisations instead of finitary ones \cite{urquhart81,kracht91}. 

On the whole, the relation between the FMP, finitary axiomatisations, and decidability is not trivial, and (\textasteriskcentered) can be satisfied in various ways. For example, \logicel{} has all three properties, while there are modal logics that are decidable and recursively axiomatisable \cite{ognjanovic06,vanditmarsch20a}, or finitely axiomatisable, decidable, but lack the FMP \cite{gabbay71a,gabbay71b,bapal}, or not finitely axiomatisable, undecidable, but have the FMP \cite{gabbay98,kurucz00,hirsch02}, or have none of the three properties \cite{hirsch02}, and so on.

Due to the undecidability of \logicqpal{}'s, up until now there was a hope that if the logics have the FMP, then we will be able to conclude they are not finitely axiomatisable. We show that, in fact, the logics do not have the FMP and the problem of the existence of finitary axiomatisations remains.

In what follows, we formally introduce \logicapal{}, \logicgal{}, \logiccal{}, and some technical notions in Section \ref{sec.two}. In Section \ref{sec:fmpapal} we prove that \logicapal{} does not have the FMP. The strategy of the proof is such that we, first, present a formula that is true in an infinite model, and then claim that the formula cannot be true in any finite model. The results for \logicgal{} and \logiccal{} follow as a corollary. We conclude and discuss further research in Section \ref{sec:conc}.

\section{Quantified public announcement logics} \label{sec.two}

Given are a countable (finite or countably infinite) set of {\em agents} $\Agents$ and a countably infinite set of {\em propositional variables} $\Atoms$ (a.k.a.\ {\em atoms}, or {\em variables}). In what follows, $G \subseteq A$, and we denote $A \setminus G$ as $\overline{G}$.

\subsection{Syntax} We start with defining the logical language and some crucial syntactic notions. 
\begin{definition}[Language] 
The language of quantified public announcement logic is defined as follows,  where $\agent\in\Agents$ and $\atom\in\Atoms$. 
\[ \lang(\Agents,\Atoms) \ \ni \ \phi ::= \atom \ | \ \neg \phi \ | \ (\phi \et \phi) \ | \ K_a \phi \ | \ [\phi]\phi \ | \ Q \phi \] 
\end{definition}
Other propositional connectives are defined by abbreviation, and, unless ambiguity results, we often omit parentheses occurring in formulas. We also often omit one or both of the parameters $\Agents$ and $\Atoms$ in $\lang(\Agents,\Atoms)$, and write $\lang(\Atoms)$ or $\lang$. Formulas are denoted $\phi,\psi$, possibly primed as in $\phi',\phi'',\dots, \psi',\dots$. Depending on which form the quantifier $Q$ takes --- $\Box$, $[G]$, or $[\!\langle G \rangle \!]$, where $G \subseteq \Agents$ --- we distinguish $\langapal$, $\langgal$, and $\langcal$, respectively. We also distinguish the language $\langel$ of \emph{epistemic logic} (without the constructs $[\phi]\phi$ and $Q\phi$).

For $K_a \phi$ read `agent $a$ knows $\phi$'. For $[\phi]\psi$, read `after public announcement of $\phi$, $\psi$'.  For $\Box \phi$, read `after any announcement, $\phi$ (is true)'. For $[G]\varphi$, read `after any joint announcement by agents from $G$, $\varphi$ is true'. And for $[\!\langle G \rangle \!]\varphi$ read `for each announcement by agents from $G$, there is a counter-announcement by the remaining agents, such that $\varphi$ is true after the joint simultaneous announcement'. The dual modalities are defined by abbreviation: $\susp_a\phi := \neg K_a \neg \phi$, $\dia{\phi}\psi := \neg [\phi] \neg \psi$, $\Dia\phi := \neg \Box \neg \phi$, $\langle G \rangle \varphi := \lnot [G]\lnot \varphi$, and $\langle \! [G] \! \rangle \varphi := \lnot [ \! \langle G \rangle \! ] \lnot \varphi$. 


The set of propositional variables that occur in a given formula $\phi$ is denoted $\var(\phi)$ (where one that does not occur in $\phi$ is called a {\em fresh variable}), its {\em modal depth} $d(\phi)$ is the maximum nesting of $K_a$ modalities, and its {\em quantifier depth} $D(\phi)$ is the maximum nesting of $Q \in \{\Box, [G], [ \! \langle G \rangle \! ]\}$ modalities. These notions are inductively defined as follows.
\begin{itemize}
\item $\var(p) = \{p\}$, $\var(\neg\phi) = \var(K_a \phi) = \var(Q \phi) = \var(\phi)$, $\var(\phi\et\psi) = \var([\phi]\psi) = \var(\phi) \union \var(\psi)$; 

\item $D(p) = 0$, $D(\neg\phi) = D(K_a \phi) = D(\phi)$, $D(\phi\et\psi) = D([\phi]\psi) = \max \{D(\phi), D(\psi)\}$, $D(Q \phi) = D(\phi)+1$; 

\item $d(p) = 0$, $d(\neg\phi) = d(Q \phi) = d(\phi)$, $d(\phi\et\psi) = \max \{d(\phi), d(\psi)\}$, $d([\phi]\psi) = d(\phi)+d(\psi)$, $d(K_a \phi) = d(\phi)+1$.
\end{itemize}

\subsection{Structures} 
We consider the following structures and structural notions in this work.
\begin{definition}[Model]
An {\em (epistemic) model} $M = (S, \sim, V )$ consists of a non-empty {\em domain} $S$ (or $\domain(M)$) of {\em states} (or `worlds'), an {\em accessibility function} $\sim: \Agents \imp {\mathcal P}(S \times S)$, where each $\sim_\agent$ is an equivalence relation, and a {\em valuation} $V: \Atoms \imp {\mathcal P}(S)$, where each $V(\atom)$ represents the set of states where $\atom$ is true. For $\state \in S$, a pair $(M,s)$, for which we write $M_s$, is a {\em pointed (epistemic) model}. \end{definition} 
We will abuse the language and also call $M_s$ a model. We will occasionally use the following disambiguating notation: if $M$ is a model, $S^M$ is its domain, $\sim^M_a$ the accessibility relation for an agent $a$, and $V^M$ its valuation.

\begin{definition}[Bisimulation]\label{bisimulation}
    Let $M = (S,\sim,V)$ and $M' = (S',\sim',V')$
    be epistemic models. 
    A non-empty relation $\bisrel \subseteq S \times S'$
    is a {\em bisimulation} if for every 
    $(s, s') \in \bisrel$,
    $p \in P$, and
    $a \in A$ 
    the conditions {\bf atoms}, {\bf forth} and {\bf back} hold.
\begin{itemize}
\item    {\bf atoms}: 
    $s \in V(p)$ iff $s' \in V'(p)$.

\item    {\bf forth}: 
    for every $t \sim_{a} s$ 
    there exists $t' \sim'_{a} s'$
    such that $(t, t') \in \bisrel$.

\item    {\bf back}: 
    for every $t' \sim'_{a} s'$
    there exists $t \sim_{a} s$ 
    such that $(t, t') \in \bisrel$.
\end{itemize}
If there exists a bisimulation $\bisrel$ between $M$ and $M'$ such that $(s, s') \in \bisrel$, then 
    $M_{s}$ and $M'_{s'}$ are
    {\em bisimilar}, notation  
    $M_{s} \bisim M'_{s'}$ (or $\bisrel: M_{s} \bisim M'_{s'}$, to be explicit about the bisimulation).

Let $Q \subseteq P$. A relation $\bisrel$ between $M$ and $M'$ satisfying {\bf atoms} for all $p \in Q$, and {\bf forth} and {\bf back}, is a {\em $Q$-bisimulation} (a bisimulation {\em restricted} to $Q$). The notation for $Q$-restricted bisimilarity is $\bisim_Q$.
\end{definition}

\noindent The notion of $n$-bisimulation, for $n \in \Naturals$, is given by defining  relations $\bisrel^0 \supseteq \dots \supseteq \bisrel^n$.
\begin{definition}[$n$-Bisimulation] \label{n-bisimulation}
    Let $M = (S,\sim,V)$ and $M' = (S',\sim',V')$
    be epistemic models, and let $n \in \Naturals$. 
    A non-empty relation $\bisrel^0 \subseteq S \times S'$ 
    is a {\em $0$-bisimulation} if {\bf atoms} holds for pair $(s,s') \in\bisrel$. Then, a non-empty relation $\bisrel^{n+1} \subseteq S \times S'$ 
    is a {\em $(n+1)$-bisimulation} if for all $p \in P$ and $a \in A$:
\begin{itemize}
%
\item    {\bf $(n+1)$-forth}: 
    for every $t \sim_{a} s$ 
    there exists $t' \sim'_{a} s'$
    such that $(t,t') \in \bisrel^n$;
\item    {\bf $(n+1)$-back}: 
    for every $t' \sim'_a s'$
    there exists $t \sim_{a} s$ 
    such that $(t,t') \in \bisrel^n$.
\end{itemize}
Similarly to $Q$-bisimulations we define {\em $Q$-$n$-bisimulations}, wherein {\bf atoms} is only required for $p \in Q \subseteq P$; $n$-bisimilarity is denoted $M_s \bisim^n M'_{s'}$, and $Q$-$n$-bisimilarity is denoted $M_s \bisim^n_Q M'_{s'}$.
\end{definition}

\subsection{Semantics}
We continue with the semantics of our logic. Let $\langel^G := \{\bigwedge_{i \in G} K_i \varphi_i \mid \varphi_i \in \langel\}$ be the set of all announcement by group $G$. 
\begin{definition}[Semantics] \label{def.truthlyingpub}
The interpretation of formulas in $\langapal \cup \langgal \cup \langcal$ on epistemic models is defined by induction on formulas.

Assume an epistemic model $M = (S, \sim, V )$, and $s \in S$.  
\[ \begin{array}{lcl}
M_s \models \atom &\mbox{iff} & \state \in V(p) \\ 
M_s \models \neg \phi &\mbox{iff} & M_s \not \models \phi \\ 
M_s \models \phi \et \psi &\mbox{iff} & M_s \models \phi  \text{ and } M_s \models \psi \\  
M_s \models K_\agent \phi &\mbox{iff} & \mbox{for all }  \stateb \in S: \state \sim_a \stateb \text{ implies } M_t  \models \phi \\  
M_s \models [\phi] \psi &\mbox{iff} & M_s \models \phi \text{ implies } M^\phi_s \models \psi \\
M_s \models \Box \psi & \mbox{iff} & \mbox{for all } \phi \in \langel : M_s \models [\phi] \psi\\
M_s \models [G] \psi &\mbox{iff} &\mbox{for all } \varphi_G \in \langel^G: M_s \models [\varphi_G] \psi\\
M_s \models [ \! \langle G \rangle \! ] \psi &\mbox{iff} &\mbox{for all }\varphi_G \in \langel^G \mbox{ there is } \chi_{\overline{G}} \in \langel^{\overline{G}}: M_s \models \varphi_G \rightarrow \langle\varphi_G \land \chi_{\overline{G}} \rangle \psi
\end{array} \] 
where $\II{\phi}_M := \{ s\in S \mid M_s\models \phi\}$; and where epistemic model $M^\phi = (S', \sim', V')$ is such that: $S' = \II{\phi}_M$, ${\sim'_a} = {\sim_a} \inter (\II{\phi}_M \times \II{\phi}_M)$, and $V'(p) := V(p) \inter \II{\phi}_M$. For $(M^\phi)^\psi$ we may write $M^{\phi\psi}$. Formula $\phi$ is {\em valid on model $M$}, notation $M \models \phi$, if for all $s \in S$, $M_s \models \phi$. Formula $\phi$ is {\em valid}, notation $\models \phi$, if for all $M$, $M \models \phi$. We call $\varphi$ \emph{satisfiable} if there is $M_s$ such that $M_s \models \varphi$.
\end{definition} 

Observe that the quantification in the definition of semantics is restricted to the quantifier-free fragment. Moreover, given the eliminability of public announcements from that fragment \cite{plaza07}, this amounts to quantifying over formulas of epistemic logic.

For clarity, we also give the semantics of the diamond versions of public and quantified announcements.

\[ \begin{array}{lcl}
M_s \models \langle \phi \rangle \psi &\mbox{iff} & M_s \models \phi \text{ and } M^\phi_s \models \psi \\
M_s \models \Dia \psi & \mbox{iff} & \mbox{there is } \phi \in \langel : M_s \models \langle \phi \rangle \psi\\
M_s \models [G] \psi &\mbox{iff} &\mbox{there is } \varphi_G \in \langel^G: M_s \models \langle \varphi_G \rangle \psi\\
M_s \models \langle \! [ G ] \! \rangle \psi &\mbox{iff} &\mbox{there is }\varphi_G \in \langel^G \mbox{ such that for all } \chi_{\overline{G}} \in \langel^{\overline{G}}: M_s \models \varphi_G \land [\varphi_G \land \chi_{\overline{G}} ] \psi
\end{array} \] 


\begin{definition}[Modal Equivalence]
Given $M_s$ and $M'_{s'}$, if for all $\phi \in\lang$, $M_s \models \phi$ iff $M'_{s'} \models \phi$, we write $M_{s} \equiv M'_{s'}$. Similarly, if this holds for all $\phi$ with $d(\phi) \leq n$, we write $M_{s} \equiv^n M'_{s'}$, and if this holds for all $\phi$ with $\var(\phi) \in Q \subseteq P$, we write $M_s \equiv_Q M'_{s'}$.  
\end{definition}

It is a standard standard model-theoretic result for $\langel$ that bisimulation between models implies their modal equivalence \cite{goranko07}. We can extend the result to $\langapal$, $\langgal$, and $\langcal$.

\begin{theorem}
Let $M_s$ and $M^\prime_{s^\prime}$ be models. Then $M_s \bisim M^\prime_{s^\prime}$ implies $M_{s} \equiv M'_{s'}$.
\end{theorem}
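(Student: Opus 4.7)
The plan is to induct on the structure of $\phi$, exploiting the crucial feature of Definition~\ref{def.truthlyingpub} that the quantifiers $\Box$, $[G]$, and $[\!\langle G\rangle\!]$ range only over $\langel$ (and its group-restricted subsets), not over the full language $\lang$. The atomic, Boolean, and $K_a\psi$ cases I would dispatch in the classical way: \textbf{atoms} for propositional variables, the induction hypothesis for Booleans, and \textbf{forth}/\textbf{back} together with the IH on $\psi$ for the knowledge modality.

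For $\phi = [\psi]\chi$ I would first apply the IH to $\psi$ to obtain $M_s \models \psi$ iff $M'_{s'} \models \psi$. If both fail the conditional is vacuous; if both hold, the key step is to verify that $\bisrel' := \bisrel \cap (\II{\psi}_M \times \II{\psi}_{M'})$ is a bisimulation between the updated models $M^\psi$ and $(M')^\psi$ containing $(s,s')$. The \textbf{forth}/\textbf{back} clauses for $\bisrel'$ reduce to the original ones for $\bisrel$, plus another appeal to the IH on $\psi$ to ensure that the matched successors lie inside $\II{\psi}_M$ and $\II{\psi}_{M'}$ respectively. A final application of the IH on $\chi$ to the bisimilar pair $M^\psi_s$, $(M')^\psi_{s'}$ closes the case.

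For $\phi = \Box\psi$ I would unfold the semantics and fix an arbitrary $\eta \in \langel$, aiming to show $M_s \models [\eta]\psi$ iff $M'_{s'} \models [\eta]\psi$. The crucial observation is that $\eta$ lies in $\langel$, so the classical preservation theorem for epistemic logic (cited immediately before the statement) handles both the executability check $M_s \models \eta$ iff $M'_{s'} \models \eta$ and the fact that $\bisrel \cap (\II{\eta}_M \times \II{\eta}_{M'})$ is still a bisimulation between $M^\eta$ and $(M')^\eta$. The IH on the proper subformula $\psi$ applied to this updated bisimilar pair then yields $M^\eta_s \models \psi$ iff $(M')^\eta_{s'} \models \psi$, and quantifying over $\eta$ delivers the result. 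The cases $[G]\psi$ and $[\!\langle G\rangle\!]\psi$ proceed identically, since $\langel^G, \langel^{\overline{G}} \subseteq \langel$; for the coalition case one simply mirrors the chosen counter-announcement $\chi_{\overline{G}}$ on the opposite model, applying the preservation argument to the combined formula $\eta_G \land \chi_{\overline{G}} \in \langel$.

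The subtle point I would want to highlight --- what I would call the heart of the argument rather than an obstacle --- is that at each quantifier case the IH is invoked only on the proper subformula $\psi$, never on the announced formula $\eta$; preservation for $\eta$ is supplied entirely by the classical $\langel$ result. This is why plain structural induction suffices here, and it is also precisely the step that would fail if the quantifiers ranged over the full quantified language.
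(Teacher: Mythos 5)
Your proof is correct and follows the same basic route as the paper: induction on formulas, with the key observation that the quantifiers range only over $\langel$ (or $\langel^G$), so that preservation of the announced formula is supplied by the classical bisimulation-invariance result for epistemic logic rather than by the induction hypothesis. The one organizational difference is that the paper justifies the quantifier cases by reordering the induction so that quantifier depth $D$ takes precedence over modal depth $d$ (making $[\eta]\psi$ count as ``smaller'' than $\Box\psi$), whereas you stay with plain structural induction and discharge the announced $\eta\in\langel$ entirely via the classical theorem, invoking the IH only on the proper subformula $\psi$; both work, and your version makes it explicit why no reordering is needed. Do make sure the induction hypothesis is stated as quantified over \emph{all} bisimilar pointed models, since you apply it at every pair of the bisimulation (not just $(s,s')$) when verifying that $\bisrel \cap (\II{\psi}_M \times \II{\psi}_{M'})$ is a bisimulation between the updated models --- your write-up does this implicitly and it is the standard formulation, so this is a presentational point rather than a gap.
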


\begin{proof}
By an induction on the structure of a formula. The proof for the case of public announcements can be found, for example, in \cite{bapal}. The cases of quantified announcements follow by the induction hypothesis in the presence of an appropriate ordering of subformulas such that the $D$-depth takes precedence over $d$-depth.
\end{proof}

Another well-known result is that for $\langel$ (and hence for $\langpal$ given the translation from $\langpal$ into $\langel$ \cite{plaza07}), $M_s \bisim^n M'_{s'}$ implies $M_{s} \equiv^n M'_{s'}$ \cite{goranko07}.
 Observe that this is not the case for any of the \logicqpal{}'s because the quantification is over formulas of arbitrary finite modal depth and thus may exceed the given $n$. Finally, due to the fact that the quantification in the presented languages is \emph{implicit}, it is also not the case for \logicqpal{}'s that $M_s \bisim_Q M'_{s'}$ implies $M_s \equiv_Q M'_{s'}$.
 
 \begin{example}
 In order to highlight the differences between different types of quantification, let us consider models $M_{s_0}$, $M_{s_0}^\psi$, and $M_{s_0}^{\psi_a}$ in Figure \ref{fig:example}.
 \begin{figure}[h]
    \begin{center}
      \begin{tikzpicture}[>=stealth',shorten >=1pt,auto,node distance=5em,thick]
        \node (T0) {$t_0:p,\ol{q}$};
        \node (T1) [below of = T0] {$t_1:\ol{p}, \ol{q}$};
      
        \node (S0) [left of = T0] {$s_0:p,q$};
        \node (S1) [left of = T1] {$s_1:\ol{p},q$};
       
        \draw[dashed] (T0) edge node {$a$} (S0);
\draw[dashed] (T1) edge node {$a$} (S1);
   \draw (T0) edge node {$b$} (T1);
        \draw (S0) edge node {$b$} (S1);

      \end{tikzpicture} \hspace{3em}
       \begin{tikzpicture}[>=stealth',shorten >=1pt,auto,node distance=5em,thick]
        \node (T0) {$t_0:p,\ol{q}$};
        \node (T1) [below of = T0] {$t_1:\ol{p}, \ol{q}$};
      
        \node (S0) [left of = T0] {$s_0:p,q$};
       
        \draw[dashed] (T0) edge node {$a$} (S0);
   \draw (T0) edge node {$b$} (T1);

      \end{tikzpicture}
      \hspace{3em}
       \begin{tikzpicture}[>=stealth',shorten >=1pt,auto,node distance=5em,thick]
        \node (T0) {$t_0:p,\ol{q}$};
                \node (T1) [below of = T0] {};
      
        \node (S0) [left of = T0] {$s_0:p,q$};
       
        \draw[dashed] (T0) edge node {$a$} (S0);

      \end{tikzpicture}

    \end{center}
          \caption{Models, from left to right, $M_{s_0}$, $M_{s_0}^\psi$, and $M_{s_0}^{\psi_a}$. The names of the worlds indicate which atoms are true there.}
                \label{fig:example}
  \end{figure}
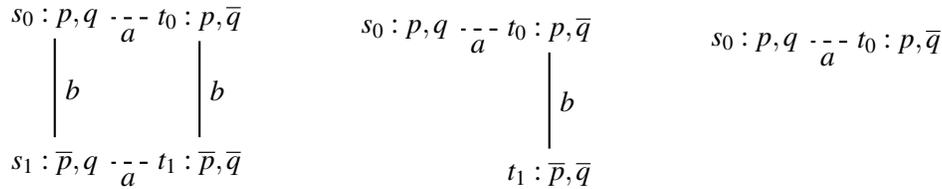
  There is a formula $\psi$ that can be announced in $M_{s_0}$, and such that $\varphi:= \susp_a \know_b p \land \susp_a \susp_b \lnot p$ will hold after the announcement. Indeed, let $\psi := \lnot p \imp \lnot q$. The result of updating $M_{s_0}$ with $\psi$ is presented in the figure and the reader can verify that $M_{s_0}^{\psi} \models \varphi$. This means that $M_{s_0} \models \langle \psi \rangle \varphi$, and hence $M_{s_0} \models \Dia \varphi$. Observe that $M_{s_0} \not \models \langle \{a\} \rangle \varphi$, since, according to the semantics, each announcement by agent $a$ should be prefixed with $\know_a$. This implies that in order to remove $s_1$, we also have to remove all $a$-reachable states, in particular $t_1$. On the other hand, $M_{s_0} \models \langle \{a\} \rangle (\know_b q \land \lnot \know_a q)$. Indeed, let $\psi_a := \know_a p$. Such an announcement results in model $M_{s_0}^{\psi_a}$ in which $\know_b q \land \lnot \know_a q$ is true. Finally, $M_{s_0} \not \models \langle \! [ \{a\} ] \! \rangle (\know_b q \land \lnot \know_a q)$, as any announcement by agent $a$ that results in a model with worlds $s_0$ and $t_0$ can be countered by agent $b$ with a simultaneous announcement $\know_b q$. Such a joint announcement results in a singleton model with the only world $s_0$.
 \end{example}

\section{APAL, GAL, and CAL do not have the finite model property}
\label{sec:fmpapal}

In this section, we show that none of the \logicqpal{}'s have the FMP. We do this by proving the result for \logicapal{} first, and then state the corresponding results for \logicgal{} and \logiccal{} as a corollary.

\begin{definition}[Finite Model Property]
A logic has the finite model property if every satisfiable formula is satisfied in a finite model.
\end{definition}



Thie idea of the proof is to show that there is a formula, \fmp, that expresses the property that:
\begin{quote}
There is some subset of worlds in the model 
that can be partitioned into sets $X$ and $Y$, such that for every element of $x\in X$, 
there is some announcement, $\psi^x$, that preserves $x$ and no states from $Y$, but there is no announcement that preserves only the states from $Y$, and none of the states from $X$.
\end{quote}
As the announcements in $\logicqpal$'s are closed under negations and conjunctions, if $X$ were finite up to bisimulation, then the announcement $\bigwedge_{x\in X}\lnot\psi^x$, 
would be adequate to preserve only the states of $Y$ and none of the states from $X$. 
Therefore, if it can be shown that such a formula is satisfiable, then it would follow that $\logicqpal$'s do not have the finite model property.
To show the satisfiability for such a formula, we need a means to identify $X$ and $Y$ states. 
An epistemic formula will not do, as any formula that characterises $X$, will have a negation that characterises $Y$, and announcing the negation would violate the property we need.
However, rather than distinguishing between two partitions using a modal property, we can distinguish them using a second order property that the $\logicapal{}$ quantifier does not range over. 
We know  \cite{agotnes16} that \logicapal{} can express such properties; 
particularly it can specify whether or not two states are $n$-bisimilar for all $n$. It is this property we use to define the necessary partition of worlds.

The proof will be via construction, where we will present a $\logicapal$ formula, show that it has an infinite model, and then show that it is impossible that a finite model exists.

\begin{theorem}\label{thm:afmp}
  \logicapal{} does not have the finite model property.
\end{theorem}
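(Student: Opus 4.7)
The plan is to exhibit a single \logicapal{} formula \fmp, show that it has an infinite model, and then derive a contradiction from the assumption that it has a finite model. The design of \fmp mirrors the \logicbapal{} strategy indicated earlier in the paper but must be lifted to the richer APAL quantifier: where Boolean witnesses sufficed for BAPAL, here the distinguishing announcements $\psi^x$ for $x\in X$ will be arbitrary epistemic formulas, and the $Y$-worlds must be arranged so that no single epistemic formula cuts cleanly between $X$ and $Y$. This is exactly what forces the counterexample to be infinite: the defeating announcement $\bigwedge_{x\in X}\lnot\psi^x$ would be an infinite conjunction, hence not a legal \langapal{} formula, whereas in any finite model it would be a perfectly legal epistemic formula and would refute \fmp.

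The formula \fmp will assert, from the perspective of a designated agent $b$ at a point $s$, that the $b$-neighbourhood of $s$ partitions into two nonempty classes $X$ and $Y$, together with: (i) for every $x \in X$, a \logicapal{} diamond $\Dia$ witnesses an announcement that preserves $x$ and removes every $Y$-world from $b$'s view; and (ii) the \logicapal{} box $\Box$ rules out any announcement that simultaneously preserves every $Y$-world and removes every $X$-world. The partition itself will be induced by a second-order flavour coming from fresh atoms $p_i$ attached to $a$-successors: each $X$-world carries a locally unique atomic signature, so it admits a cheap $\susp_a p_i$ witness, while each $Y$-world is designed to be $n$-bisimilar, via its $a$-successors, to some $X$-world at every finite $n$, so that no single epistemic formula separates all of $Y$ from all of $X$.

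For the positive direction I construct an infinite model with $X = \{A_0, A_1, \dots\}$ and $Y = \{B_0, B_1, \dots\}$ all $b$-connected to $s$; each $A_i$ has an $a$-successor in which only $p_i$ holds, so $\psi^{A_i} := \susp_a p_i$ preserves $A_i$ but eliminates every $B_j$, while the $a$-successor structure of the $B_j$ realises every finite combination of the $p_i$, ensuring that no epistemic formula true at all $B_j$ is false at all $A_i$. For the negative direction, suppose $M_s$ is finite and $M_s \models \fmp$. Then $X$ is finite, say $X = \{x_1,\dots,x_n\}$, and conjunct (i) yields epistemic witnesses $\psi^{x_1},\dots,\psi^{x_n}$. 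The epistemic formula $\chi := \bigwedge_{i=1}^n \lnot \psi^{x_i}$ preserves every $y \in Y$ (each $\psi^{x_i}$ is false at $y$, so $\lnot \psi^{x_i}$ is true there) and eliminates every $x_i \in X$ (since $x_i \models \psi^{x_i}$, hence $x_i \not\models \chi$); announcing $\chi$ thus violates conjunct (ii), giving the desired contradiction.

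The main obstacle is the careful engineering of \fmp so that three things happen simultaneously: the $X$/$Y$ partition is genuinely forced by the formula; the per-world witnesses $\psi^x$ are guaranteed to exist individually; and yet no single epistemic announcement can collectively play the role that $\bigwedge_x \lnot \psi^x$ would play. Getting the second-order indistinguishability of $Y$ from $X$ at every finite modal depth built into the infinite model without accidentally creating an epistemically-expressible separator is the delicate part, and is exactly what the APAL quantifier (and its ability to range over all modal depths, cf.\ \cite{agotnes16}) is used for. Once \fmp is constructed, the corollary for \logicgal{} and \logiccal{} should follow by reworking the quantifier in \fmp into its group-announcement or coalitional analogue, since the finite-conjunction argument in the negative direction is agnostic to which flavour of announcement quantifier is used.
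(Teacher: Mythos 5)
Your skeleton is the same as the paper's: a single formula asserting a partition of $b$'s neighbourhood into an ``individually removable'' class $X$ and a residual class $Y$, a $\Dia$-witness per $X$-world, and a $\Box$-conjunct forbidding any one announcement from doing collectively what the finite conjunction $\bigwedge_{x\in X}\lnot\psi^x$ would do; plus an infinite model built from finite-vocabulary bisimilarity and a finite-model contradiction via that conjunction. The gap is that you never construct the one thing that makes this work for \logicapal{}: a \emph{definable} partition. You correctly observe that no $\langel$-formula can separate $X$ from $Y$ (its negation would itself be an admissible announcement and would defeat the $\Box$-conjunct), but you then leave unspecified how \fmp{} is supposed to say ``this world is in $X$'' and ``this world is in $Y$''. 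The paper resolves this by defining the two classes with the \logicapal{} quantifier itself: a world is a $\mathbf{root}$ ($Y$) world if after \emph{every} announcement all its $a$-successors of a designated kind agree on a labelling formula (i.e.\ there is essentially one such successor up to $n$-bisimulation for every $n$), and a $\mathbf{stem}$ ($X$) world if some announcement exhibits two disagreeing successors. Without formulas of this shape your conjuncts cannot even be written down; this is the technical heart of the proof, not a detail to be filled in later.

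Two further problems. First, your model sketch is internally inconsistent: if $\psi^{A_i}=\susp_a p_i$ is to eliminate every $B_j$, then no $B_j$ may have an $a$-successor satisfying $p_i$, for any $i$; but then the $a$-successor structure of the $B_j$ cannot ``realise every finite combination of the $p_i$''. In the paper's model the indistinguishability runs the other way: there is a \emph{single} $Y$-world, and for every finite vocabulary $P_n$ it is $P_n$-bisimilar to cofinitely many $X$-worlds, which is what blocks any $\langel$-announcement from preserving it while removing all of $X$. Second, your contradiction step silently assumes that the $X$/$Y$ classification is stable under announcements: to conclude that $\psi^{x_i}$ is false at $y\in Y$, and that announcing $\chi$ leaves a configuration violating the $\Box$-conjunct, you need the surviving $Y$-worlds to still count as $Y$-worlds in the updated model. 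The paper has to guard every quantifier with an auxiliary formula ($\mathbf{tier}$) and to verify the persistence property $\mathbf{root}\imp\Box(\mathbf{tier}\imp\mathbf{root})$ precisely for this reason; your proposal does not address it.
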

\newcommand{\Root}{\ensuremath{\mathbf{root}}}
\newcommand{\stem}{\ensuremath{\mathbf{stem}}}
\newcommand{\tier}{\ensuremath{\mathbf{tier}}}
\begin{proof}
  We will give the proof by construction. Consider the following formula $\fmp$.
  \begin{eqnarray*}
    \Root &=& \Box(\susp_a(\lnot x\land \know_b\lnot x)\imp\know_a(\lnot x\imp\know_b\lnot x))\\
    \stem &=& \Dia(\susp_a(\lnot x\land\know_b\lnot x)\land\susp_a(\lnot x\land\susp_b x))\\
    \tier &=& \know_b(x\land \susp_a\lnot x\land \know_a(\lnot x\imp\susp_b x))\\
    \fmp &=& \bigwedge\left(
      \begin{array}{l}
        \tier\land \susp_b\Root \land \susp_b\stem \\
        \know_b(\stem \imp \Dia(\tier\land\know_b\stem))\\ 
        \know_b(\Root\imp\Box(\tier\imp\susp_b\stem))
      \end{array}
    \right)
  \end{eqnarray*}
  The formula $\fmp$ partitions a set of $b$-related worlds into two sets: 
  {\em root} (the worlds where the formula $\Root$ is true);
  and {\em stem} (the worlds where the formula $\stem$ is true).
  We note that $\stem$ is equivalent to $\lnot\Root$ so this is a partition.

  The formula $\tier$ sets a label $x$, where $x$ is true at all the $b$-related worlds ({\em tier}-0), 
  at every $b$-related world there is an $a$-related world where $x$ is false ({\em tier}-1), 
  and from each of those worlds there is a $b$-related world where $x$ is true ({\em tier}-2). 
  This creates a consistent labelling used to define $\Root$ and $\stem$.

  The formula $\Root$ is true at a world if there is only one $a$-reachable tier-1 world, up to finite bisimulation,
  while $\stem$ is true if there is more than one $a$-reachable tier-1 world. 
  
  The first line of $\fmp$ states that $\tier$ is true, and at least one of the $b$-related worlds satisfies $\stem$, and at least one of the $b$-related worlds satisfies $\Root$. 
  The second line of $\fmp$ states that there is some public announcement that removes all of the $\Root$ worlds (and possibly some, but not all, of the $\stem$ worlds too),
  and the final line of $\fmp$ states that there is no public announcement that removes all of the $\stem$ worlds leaving a $\Root$ world. 
  This line needs a small caveat. 
  Rather than just talking about removing $\stem$ and $\Root$ worlds, 
  there is the possibility that an announcement could change a $\stem$ world to a $\Root$ world, or vice-versa.
  However each of these announcement quantifiers is guarded by the formula $\tier$, and we have the property $\Root\imp\Box\Root$.
  Therefore, we do not need to consider the cases where $\Root$ worlds are transformed into $\stem$ worlds.

As an example consider a finite model in Figure \ref{fig:finite}. The model does not satisfy \fmp, and in particular it does not satisfy the third conjunct. Indeed, let the current world be $s_0$ that satisfies $\Root$. We show that $M_{s_0} \not \models \Box(\tier\imp\susp_b\stem)$, or, equivalently, $M_{s_0} \models \Dia(\tier \land \know_b \lnot \stem)$. By the semantics this amounts to the fact that there is $\psi \in \langel$ such that $\tier \land \know_b \lnot \stem$ holds in the updated model. Let, for example, $\psi :=\know_a(\lnot x \imp \know_b (x \imp \know_a \lnot p_4))$. It is easy to verify that the resulting updated model, which only consists of states $s_0$, $t_0$, and $u_0$, satisfies $\tier \land \know_b \lnot \stem$.

 \begin{figure}[h]
    \begin{center}
      \begin{tikzpicture}[>=stealth',shorten >=1pt,auto,node distance=5em,thick]
        \node (T0) {$t_0:\ol{x}$};
        \node (T1) [below of = T0] {$t_1:\ol{x}$};
        \node (T2) [below of = T1] {$t_2:\ol{x}$};
        \node (T3) [below of = T2] {$t_3:\ol{x}$};
        \node (T4) [below of = T3] {$t_4:\ol{x}$};
        \node (S0) [left of = T0] {$s_0:x$};
        \node (S1) [left of = T1] {$s_1:x$};
        \node (S2) [left of = T3] {$s_2:x$};
        \node (U0) [right of = T0] {$u_0:x$};
        \node (U1) [right=50pt, right of = T1] {$u_1:x,p_1,p_2,p_3,p_4,p_5\hdots$};
        \node (U2) [right=50pt, right of = T2] {$u_2:x,p_2,p_3,p_4,p_5,p_6\hdots$};
        \node (U3) [right=50pt, right of = T3] {$u_3:x,p_3,p_4,p_5,p_6,p_7\hdots$};
        \node (U4) [right=50pt, right of = T4] {$u_4:x,p_4,p_5,p_6,p_7,p_8\hdots$};
        \draw[dashed] (T0) edge node {$a$} (S0);
        \draw[dashed] (T1) edge (T2);\draw[dashed] (T2) edge (S1);\draw[dashed] (T1) edge node {$a$} (S1);
        \draw[dashed] (T3) edge (T4);\draw[dashed] (T4) edge (S2);\draw[dashed] (T3) edge node {$a$} (S2);
        \draw (S0) edge node {$b$} (S1);
        \draw (S1) edge node {$b$} (S2);
        \draw (T0) edge node {$b$} (U0);
        \draw (T1) edge node {$b$} (U1);
        \draw (T2) edge node {$b$} (U2);
        \draw (T3) edge node {$b$} (U3);
        \draw (T4) edge node {$b$} (U4);
      \end{tikzpicture}
      \caption{A model that does not satisfy formula~\fmp. The names of the worlds indicate which atoms are true there, e.g. $x$ is true and $\ol{x}$ means that $x$ is false.}
      \label{fig:finite}
    \end{center}
  \end{figure}
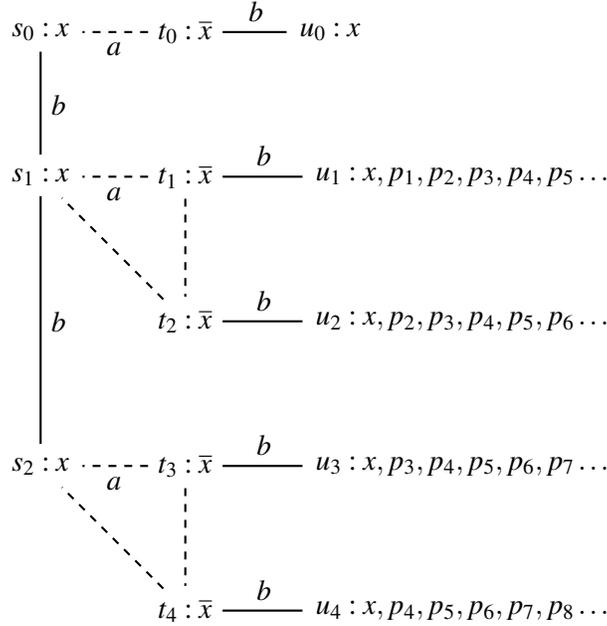

  Now we show that $\fmp$ is a satisfiable formula. 
  The model $M=(S,\sim,V)$, which satisfies $\fmp$, is built as follows:
  \begin{enumerate}
    \item $S = \{s_0,s_1,\hdots\}\cup \{t_0,t_1,\hdots\}\cup \{u_0,u_1,\hdots\}$
    \item $\forall i\geq 0$, $s_i\sim_a s_i$, $s_i\sim_a t_{2i}$, $t_{2i}\sim_a s_i$, $t_i\sim_a t_i$ and $u_i\sim_a u_i$.
    \item $\forall i>0$, $s_i\sim_a t_{2i-1}$, $t_{2i}\sim_a t_{2i-1}$, $t_{2i-1}\sim_a s_i$ and $t_{2i-1}\sim_a t_{2i}$.
    \item $\forall i,j\geq 0,\ s_i\sim_b s_j$, $t_i\sim_b u_i$, $u_i\sim_b t_i$, $t_i\sim_b t_i$ and $u_i\sim_b u_i$.
    \item $V(x) = \{s_0,s_1,\hdots\}\cup \{u_0,u_1,\hdots\}$
    \item $\forall i\geq 0$, $V(p_i) = \{u_k\ |\ 0<k \leqslant i\}$.
  \end{enumerate}
  This model is represented in Figure~\ref{fig:rootandstem}. 
    \begin{figure}[H]
    \begin{center}
      \begin{tikzpicture}[>=stealth',shorten >=1pt,auto,node distance=5em,thick]
        \node (T0) {$t_0:\ol{x}$};
        \node (T1) [below of = T0] {$t_1:\ol{x}$};
        \node (T2) [below of = T1] {$t_2:\ol{x}$};
        \node (T3) [below of = T2] {$t_3:\ol{x}$};
        \node (T4) [below of = T3] {$t_4:\ol{x}$};
        \node (T5) [below of = T4] {$t_5:\ol{x}$};
        \node (T6) [below of = T5] {$t_6:\ol{x}$};
        \node (T7) [below of = T6] {$t_7:\ol{x}$};
        \node (T8) [below of = T7] {$t_8:\ol{x}$};
        \node (S0) [left of = T0] {$s_0:x$};
        \node (S1) [left of = T1] {$s_1:x$};
        \node (S2) [left of = T3] {$s_2:x$};
        \node (S3) [left of = T5] {$s_3:x$};
        \node (S4) [left of = T7] {$s_4:x$};
        \node (U0) [right of = T0] {$u_0:x$};
        \node (U1) [right=50pt, right of = T1] {$u_1:x,p_1,p_2,p_3,p_4,p_5\hdots$};
        \node (U2) [right=50pt, right of = T2] {$u_2:x,p_2,p_3,p_4,p_5,p_6\hdots$};
        \node (U3) [right=50pt, right of = T3] {$u_3:x,p_3,p_4,p_5,p_6,p_7\hdots$};
        \node (U4) [right=50pt, right of = T4] {$u_4:x,p_4,p_5,p_6,p_7,p_8\hdots$};
        \node (U5) [right=50pt, right of = T5] {$u_5:x,p_5,p_6,p_7,p_8,p_9,\hdots$};
        \node (U6) [right=50pt, right of = T6] {$u_6:x,p_6,p_7,p_8,p_9,p_{10},\hdots$};
        \node (U7) [right=50pt, right of = T7] {$u_7:x,p_7,p_8,p_9,p_{10},p_{11},\hdots$};
        \node (U8) [right=50pt, right of = T8] {$u_8:x,p_8,p_9,p_{10},p_{11},p_{12},\hdots$};
        \node (etc) [below of = S4] {$\vdots$};
        \draw[dashed] (T0) edge node {$a$} (S0);
        \draw[dashed] (T1) edge (T2);\draw[dashed] (T2) edge (S1);\draw[dashed] (T1) edge node {$a$} (S1);
        \draw[dashed] (T3) edge (T4);\draw[dashed] (T4) edge (S2);\draw[dashed] (T3) edge node {$a$} (S2);
        \draw[dashed] (T5) edge (T6);\draw[dashed] (T6) edge (S3);\draw[dashed] (T5) edge node {$a$} (S3);
        \draw[dashed] (T7) edge (T8);\draw[dashed] (T8) edge (S4);\draw[dashed] (T7) edge node {$a$} (S4);
        \draw (S0) edge node {$b$} (S1);
        \draw (S1) edge node {$b$} (S2);
        \draw (S2) edge node {$b$} (S3);
        \draw (S3) edge node {$b$} (S4);
        \draw (S4) edge node {$b$} (etc);
        \draw (T0) edge node {$b$} (U0);
        \draw (T1) edge node {$b$} (U1);
        \draw (T2) edge node {$b$} (U2);
        \draw (T3) edge node {$b$} (U3);
        \draw (T4) edge node {$b$} (U4);
        \draw (T5) edge node {$b$} (U5);
        \draw (T6) edge node {$b$} (U6);
        \draw (T7) edge node {$b$} (U7);
        \draw (T8) edge node {$b$} (U8);
      \end{tikzpicture}
      \caption{A model that does satisfy formula~\fmp. For each world $s_i$, where $i>0$, there is an announcement that preserves $s_i$, but not $s_0$. 
      However, there is no announcement that preserves only $s_0$. \label{fig:rootandstem}}
    \end{center}
  \end{figure}
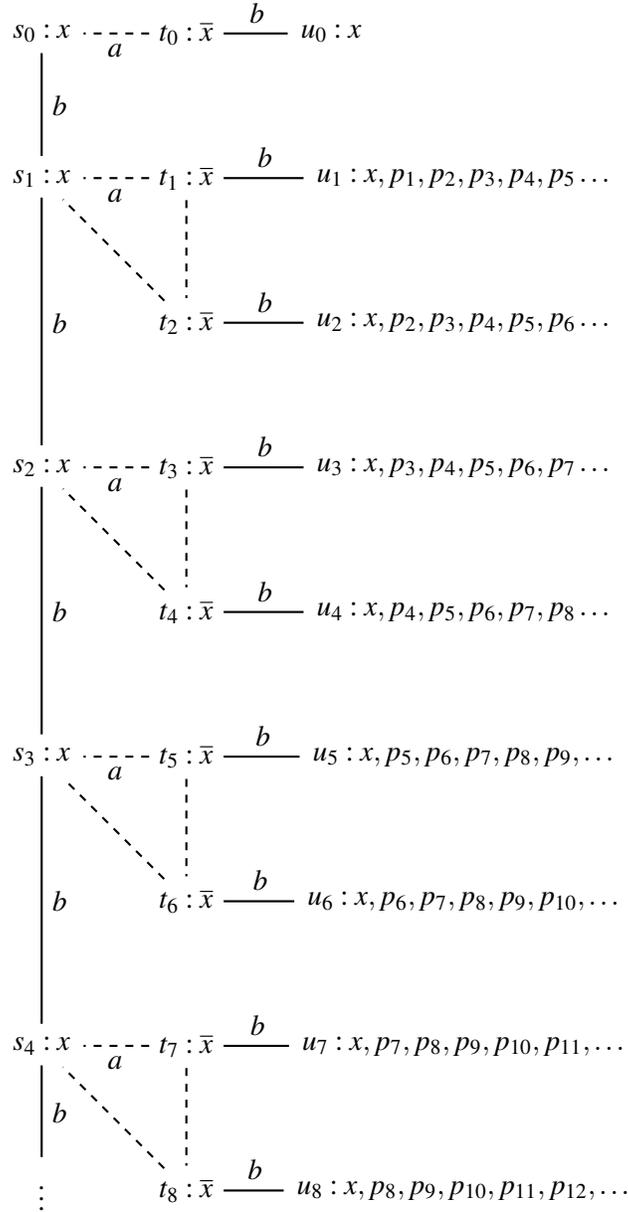
  The formula $\Root$ is true only at the state $s_0$, and the formula $\stem$ is true at $s_i$ for all $i>0$.
  The states $s_i$ are tier-0 states, the states $t_i$ are tier-1 states, and the states $u_i$ are tier-2 states.
  Therefore $M_{s_0}\models \tier\land\susp_b\Root\land\susp_b\stem$.
  In any state $s_i$ that satisfies $\stem$, we can make an announcement $\psi_i :=\susp_a\susp_b p_{2i}$ that will preserve 
  all the states $s_j, t_j,u_j$ where $0<j\leq 2i$. 
  Therefore $M_{s_i}\models \langle \psi_i \rangle (\tier\land\know_b\stem)$, so $M_{s_0}\models\know_b(\stem\imp\Dia(\tier\land\know_b\stem))$.
  Finally, consider any announcement, $\psi$ that preserves the root state $s_0$ and keeps $\tier$ true.
  Suppose that $\var(\psi)\subseteq P_n = \{x,p_0,\hdots,p_n\}$.
  The state $s_0$ is $P_n$-bisimilar to all states $s_i$ for $i>n$, so all these states will be preserved and continue to satisfy $\stem$.
  Therefore $M_{s_0}\models\know_b(\Root\imp\Box(\tier\imp\susp_b\stem))$ as required. Since all three conjuncts are now satisfied,  $M_{s_0}\models\fmp$.

  Finally, we reason that $\fmp$ cannot have a finite model via a contradiction. Suppose that $\fmp$ did have some finite model $M = (S,\sim,V)$. 
  We know $\Root$ is equivalent to $\lnot\stem$, and $\Root\imp\Box(\tier\imp\Root)$.
  As $M_s\models\fmp$ we have that for each $s_1,s_2,\hdots, s_k$, where $s\sim_b s_i$ and $M_{s_i}\models\stem$, 
  there is some announcement $\psi_i$ such that $M_{s_i}^{\psi_i}\models\tier\land\know_b\stem$. 
  Therefore, for some $t\sim_b s$ where $M_t\models\Root$, we have $M_t\models\lnot\psi_i$ for $i = 1,\hdots, k$.
  Announcing $\varphi:=\bigwedge_{i=1}^k(\tier\imp\lnot\psi_i)$ at $t$ therefore preserves $t$, and removes every $s_i\sim s$ where $M_{s_i}\not\models\Root$.
  So it follows that $M_t\models\Dia(\tier\land\know_b\Root)$, contradicting the fact that $M_s\models\fmp$. 

\end{proof}

The proof of Theorem \ref{thm:afmp} can be extended to the cases of \logicgal{} and \logiccal{}. To see this, it is enough to notice that the formula $\fmp$ forces a model with a root-and-stem structure, and thus arbitrary announcements can be `modelled' by announcements of the grand coalition. In particular, we need to substitute $\Box$ and $\Dia$ in $\fmp$ with $[A]$ and $\langle A \rangle$ for \logicgal{}, and with $[ \! \langle A \rangle \! ]$ and $\langle \! [A ] \! \rangle$ for \logiccal{}. Finally, the model in Figure \ref{fig:rootandstem} will also work, since the intersection of $a$- and $b$-relations is the identity, and the set of all possible \logicapal{} updates then coincides with those of \logicgal{} and \logiccal{}.

\begin{corollary}
\logicgal{} and \logiccal{} do not have the finite model property.
\end{corollary}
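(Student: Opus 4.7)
The plan is to transport the proof of Theorem~\ref{thm:afmp} to GAL and CAL by a purely syntactic substitution of quantifiers. Define $\fmp_{gal}$ (resp.\ $\fmp_{cal}$) from $\fmp$ by uniformly replacing every occurrence of $\Box$ and $\Dia$, including those hidden inside $\Root$ and $\stem$, by $[A]$ and $\langle A\rangle$ (resp.\ by $[\!\langle A\rangle\!]$ and $\langle\![A]\!\rangle$), where $A=\{a,b\}$ is the grand coalition.

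A first easy step is to reduce CAL to GAL. For the grand coalition, the complementary coalition $\overline{A}=\emptyset$ forces $\langel^{\overline{A}}=\{\top\}$, so the semantic clauses $[\!\langle A\rangle\!]\psi$ and $\langle\![A]\!\rangle\psi$ unfold to $[A]\psi$ and $\langle A\rangle\psi$ respectively. Hence it suffices to show that $\fmp_{gal}$ is satisfiable but has no finite model; the CAL case is then immediate.

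For satisfiability, I would verify that the model $M$ of Figure~\ref{fig:rootandstem} still works. The key structural property is that in $M$ the intersection $\sim_a \cap \sim_b$ equals the identity relation, which lets me translate each epistemic witness $\psi_i=\susp_a\susp_b p_{2i}$ used in the APAL proof into a grand-coalition formula of the form $K_a\alpha_i\land K_b\beta_i$ (for instance $K_a(x\lor\susp_b p_{2i})\land K_b\top$) whose update of $M$ preserves $s_i$, removes $s_0$, and keeps $\tier\land\know_b\stem$ true at $s_i$. Hence the $\langle A\rangle$-conjunct of $\fmp_{gal}$ is witnessed. The $[A]$-conjunct becomes only easier to satisfy when the quantifier ranges over the smaller set $\langel^A\subseteq\langel$, so it carries over automatically, as does the $P_n$-bisimilarity argument used to refute an $\langle A\rangle$-witness at the root.

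For the finite-model part, I would replay the contradiction argument of Theorem~\ref{thm:afmp}. Given a hypothetical finite $M\models\fmp_{gal}$ with root $s$ and $\stem$-neighbours $s_1,\ldots,s_k$, the GAL witnesses $\psi_i=K_a\alpha_i\land K_b\beta_i\in\langel^A$ must be combined into a single grand-coalition announcement witnessing the forbidden $\langle A\rangle(\tier\land\know_b\Root)$ at some $\Root$-world. The main obstacle is precisely this combination step: since $\lnot(K_a\alpha\land K_b\beta)$ is not itself grand-coalition, the APAL counter-announcement $\bigwedge_{i=1}^{k}(\tier\imp\lnot\psi_i)$ does not syntactically lie in $\langel^A$. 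My plan to overcome this is to exploit the $\sim_a\cap\sim_b=\text{id}$ structure forced by the root-and-stem skeleton (reinforced, if necessary, by an additional conjunct in $\fmp_{gal}$) to show that on every model of $\fmp_{gal}$ the $[A]$-quantifier ranges effectively over all epistemic announcements relevant to the contradiction, so that the APAL argument transfers verbatim and the corollary follows.
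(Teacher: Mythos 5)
Your proposal follows essentially the same route as the paper: replace $\Box/\Dia$ by the grand-coalition quantifiers, note that $[\!\langle A\rangle\!]$ collapses to $[A]$ because $\overline{A}=\emptyset$, reuse the model of Figure~\ref{fig:rootandstem}, and appeal to the fact that $\sim_a\cap\sim_b$ is the identity so that grand-coalition announcements realise the same updates as arbitrary ones. The combination step you flag --- rewriting $\bigwedge_{i}(\tier\imp\lnot\psi_i)$ as a member of $\langel^A$ --- is precisely the point the paper compresses into its remark that on the root-and-stem structure the APAL updates coincide with those of GAL and CAL, so your extra caution is warranted but does not amount to a different argument.
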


\section{Conclusions and further research}
\label{sec:conc}
It has been an open question for quite a while whether quantified public announcement logics have the finite model property, and we have answered the question for \logicapal{}, \logicgal{}, and \logiccal{} negatively. Not only this result is interesting in itself, it also clarifies some other properties of \logicqpal{}'s. In particular, from the expressivity perspective, we presented a formula that forces infinite models. Moreover, we have found the value of one of the unknowns in the expression
\begin{quote}
\textit{Finitary axiomatisation} and \textit{FMP} imply \textit{decidability}
\end{quote}
and thus only the problem of finding finitary axiomatisations of \logicqpal{}'s stands. Finally, the result trivially extends to the logics that are extensions of any of \logicqpal{}'s (e.g. \cite{galimullin21}).

Interestingly, restricting the quantification in \logicapal{} to just announcements of Boolean formulas still results in a logic with no FMP. This was shown in \cite{bapal}, where the authors used a somewhat simpler partition of worlds distinguishable by a formula of epistemic logic. Since in \logicapal{} the quantifiers range over all epistemic formulas, we used a more complex second-order property of worlds being $n$-bisimilar for arbitrary $n$. It is unknown whether \logicapal{} with the quantification restricted only to positive (universal) fragment of epistemic logic \cite{vanditmarsch20a} has the FMP, but given the aforementioned results, it seems very unlikely.

\section*{Acknowledgments}
We thank the anonymous reviewers of the paper.
This work was carried out while Hans van Ditmarsch was affiliated to LORIA, CNRS, University of Lorraine, France.



\bibliographystyle{eptcs}
\bibliography{references.bib}

\end{document}